\newcommand{\conv}{\operatornamewithlimits{conv}}
\newcommand{\CALL}[1]{\textsc{#1}}
\newcommand{\BigO}[1]{\ensuremath{\mathcal{O}\left(#1\right)}}
\newcommand{\BigTheta}[1]{\ensuremath{\Theta\left(#1\right)}}
\newcommand{\head}{\operatornamewithlimits{\CALL{head}}}
\newcommand{\tail}{\operatornamewithlimits{\CALL{tail}}}
\newcommand{\pred}{\operatornamewithlimits{\CALL{pred}}}
\newcommand{\lsucc}{\operatornamewithlimits{\CALL{succ}}}
\newcommand{\nil}{\text{\textbf{nil}}}
\newcommand{\LAnd}{\text{\textbf{ and }}}
\newcommand{\Or}{\text{\textbf{ or }}}
\crefname{section}{Sect.}{Sect.}
\newcommand{\tangents}{\operatornamewithlimits{\CALL{tangents}}}
\newcommand{\BuildTree}{\operatornamewithlimits{\CALL{buildTree}}}
\newcommand{\extractHull}{\operatornamewithlimits{\CALL{extractHull}}}
\newcommand{\Use}{\operatornamewithlimits{\mathit{Use}}}
\newcommand{\aabove}{\operatornamewithlimits{\CALL{above}}}
\newcommand{\bbelow}{\operatornamewithlimits{\CALL{below}}}
\newcommand{\getBridge}{\operatornamewithlimits{\CALL{getBridge}}}
\newcommand{\iinsert}{\operatornamewithlimits{\CALL{insert}}}
\newcommand{\delete}{\operatornamewithlimits{\CALL{delete}}}
\newcommand{\merge}{\operatornamewithlimits{\CALL{merge}}}
\begin{document}


\title{A Simple Convex Layers Algorithm}
\titlerunning{Convex Layers}  
%
\author{Raimi A. Rufai\inst{1} \and Dana S. Richards \inst{2}}
\authorrunning{Raimi Rufai and Dana Richards} 
%
\tocauthor{Raimi Rufai and Dana Richards}
\institute{SAP Labs, 111 rue Duke, Suite 9000, Montreal QC H3C 2M1, Canada\\
\email{raimi.rufai@sap.com}\\ 
\and
Department of Computer Science, George Mason University, \\
4400 University Drive MSN 4A5, Fairfax, VA 22030, USA \\
\email{richards@cs.gmu.edu}}

\maketitle 

\begin{abstract}
Given a set of $n$ points $P$ in the plane, the first layer $L_1$ of $P$ is formed by the points that appear on $P$'s convex hull. In general, a point belongs to layer $L_i$, if it lies on the convex hull of the set $P \setminus \bigcup_{j<i}\{L_j\}$. The \emph{convex layers problem} is to compute the convex layers $L_i$. Existing algorithms for this problem either do not achieve the optimal $\BigO{n\log n}$ runtime and linear space, or are overly complex and difficult to apply in practice. We propose a new algorithm that is both optimal and simple. The simplicity is achieved by independently computing four sets of monotone convex chains in $\BigO{n\log n}$ time and linear space. These are then merged in  $\BigO{n\log n}$ time.
\keywords{convex hull, convex layers, computational geometry}
\end{abstract}

\section{Introduction}
\label{section:layering:intro}
Algorithms for the convex layers problem that achieve optimal time and space complexities are arguably complex and discourage implementation. 
We give a simple $\BigO{n \log n}$-time and linear space algorithm for the problem, which is optimal. 
Our algorithm computes four quarter convex layers using a plane-sweep paradigm as the first step. The second step then merges these  in $\BigO{n \log n}$-time.

Formally, the \emph{convex layers}, $\mathbb{L}(P) = \{L_1, L_2,
\cdots, L_k \}$, of a set $P$ of $n \geq 3$ points is a partition of $P$ into $k
\leq \lceil n/3\rceil$ disjoint subsets $L_i$, $i = 1, 2, \cdots, k$ called
\emph{layers}, such that each layer $L_i$ is the ordered\footnote{That is the
layers are polygons not sets.} set of vertices of the convex hull of $P \setminus \bigcup_{j<i}\{L_j\}$.
Thus, the outermost layer $L_1$ coincides exactly with the convex hull
of $P$, $\conv(P)$. The \emph{convex layers problem} is to
compute $\mathbb{L}(P)$.

\label{section:layering:applications}
Convex layers have several applications in various domains, including robust
statistics, computational geometry, and pattern recognition. 

\section{Related Work}
\label{section:layering:relatedWorks} 
A brute-force solution to the convex layers problem is obvious---construct each
layer $L_i$  as the convex hull of the set $P\setminus \bigcup_{j < i} L_j$ using some
suitable convex hull algorithm. 
The brute-force algorithm will take $O(kn\log n)$ time where $k$ is the number of the layers. 
We say this algorithm ``peels off'' a set of points one layer at a time.
This \emph{peeling} approach is reminiscent of many convex layers algorithms. 
Another general approach to this problem is the \emph{plane-sweep} paradigm. 

One of the earliest works to take the peeling approach is Green and Silverman
\cite{green79}. Their algorithm repeatedly invokes quickhull to extract
a convex layer at each invocation. This algorithm runs in $O(n^3)$
worst-case time, and $O(n^2\log n)$ expected time.

Overmars and van Leeuwen's \cite{Overmars1981166} algorithm for this
problem runs in $O(n\log^2 n)$. It is based on a fully dynamic data structure for
maintaining a convex hull under arbitrary insertions and deletions of points.
Each of these update operations takes $O(\log^2 n)$ time, since constructing the
convex layers can be reduced to inserting all the points into the data structure
in time $O(n\log^2 n)$, marking points on the current convex hull, deleting
them, and then repeating the process for the next layer. Since each point is marked
exactly once and deleted exactly once, these steps
together take no more than $O(n\log^2 n)$ time. 

Chazelle's \cite{chazelle85} algorithm for this problem runs in $O(n\log n)$ time and $O(n)$ space, 
both of which are optimal. A new algorithm is discussed in \cref{section:layering:results}. (Chazelle \cite{chazelle85} used a balanced tree approach as well as our new algorithm, 
but the information stored in our tree corresponds to a very different set of polygonal chains.)

The first algorithm on record that uses the plane-sweep approach is a modification of
Jarvis march proposed by Shamos \cite{ps-cgi-90}. The algorithm works by doing
a radial sweep, changing the pivot along the way, just as in the Jarvis march,
but does not stop after processing all the points. It proceeds with another
round of Jarvis march that excludes points found to belong to the convex hull
in the previous iteration. This way, the algorithm runs in $O(n^2)$.

Nielsen \cite{nielsen96} took advantage of Chan's grouping trick \cite{chan96b} to obtain yet another optimal 
algorithm for the convex layers problem. Nielsen's algorithm is output-sensitive in that it can be parametrized
by the number of layers $l$ to be computed. It runs in $O(n\log H_l)$ time where $H_l$
is the number of points appearing on the first $l$ layers.

\section{New Algorithm}
\label{section:layering:results}
Our algorithm builds four sets of convex layers, each with a distinct 
direction of curvature.  
The set of points $P$ must be known ahead of time.
For ease of presentation, we will assume below that the points are in general position 
(no three on a line and no two share a coordinate). Removing this assumption is easy and our implementation does not make the assumption. Each point's horizontal ranking is precomputed by sorting the points using their $x$-coordinates.

A \emph{northwest} monotone convex chain $C = (p_1, p_2, \cdots, p_n)$ has
$p_i.x<p_{i+1}.x$, $p_i.y<p_{i+1}.y$ and no point in the chain is above the extended line 
through $p_i$ and $p_{i+1}$, $1\le i < n$. The head (tail) of the chain $C$ is defined as $\head(C)=p_1$ ($\tail(C)=p_n$).
A {\em full} monotone convex chain is formed by augmenting $C = (p_1, p_2, \cdots, p_n)$ 
with two fictional \emph{sentinel} points, $p_0 = (p_{1}.x, -\infty)$, and $p_{n+1} = (\infty, p_n.y)$. Note that given  a full chain $C$, the calls $\head(C)$ and $\tail(C)$ return $p_1$ and $p_n$ and not the sentinels. The $\infty$'s are symbolic and the operations are overloaded for them.

We call the chain \emph{northwest} since it bows outward in that direction.
Similarly we will refer to a chain as \emph{southwest} if it is northwest  after rotating  the point set
90 degrees clockwise about the origin. \emph{Northeast} and \emph{southeast} are defined analogously.
Except in the section on merging below, {\bf all} of our chains will be northwest monotone 
convex chains, so we will simply call them \emph{chains}.

We say a chain $C_1$ \emph{precedes} another chain $C_2$, if $\tail(C_1).x < \head(C_2).x$.
Additionally, we say a line is \emph{tangent} to a chain, if it touches the chain and no point in the chain is above the line.

Let chain $C_1$ precede chain $C_2$.
If $\tail(C_1).y < \tail(C_2).y$ then a \emph{bridge} between the chains is a two-point chain
$(p_1, p_2)$ where $p_1$ is in chain $C_1$, $p_2$ is in chain $C_2$ and 
the line through $p_1$ and $p_2$ is tangent to both chains.
If $\tail(C_1).y \ge \tail(C_2).y$ then the chain $(\tail(C_1), \tail(C_2))$ is a \emph{degenerate bridge}.

Let $C = (p_0, p_1, p_2, \cdots, p_n, p_{n+1})$ be a full chain.
$C$ \emph{dominates} a point $p$ if $p$ is below the segment $(p_i, p_{i+1})$, for some
$1\le i\le n$.
A full chain $C$ dominates a full chain $C'$ if $C$ dominates every point of $C'$. The (northwest) \emph{hull} 
of a point set $P$, or just the \emph{hull chain} of $P$, is the chain of points 
from $P$ that dominates every other point in $P$.

\subsection{Hull Tree Data Structure}
The \emph{hull tree} $T$ for the point set $P$, is a binary tree. $T$ is either $\nil$ or has
\begin{enumerate*}[label=\emph{\alph*}), before=\unskip{: }, itemjoin={{; }}, itemjoin*={{, and }}]\item
A root node that contains the hull chain for $P$
\item
A partition of the non-hull points from $P$ by some $x$ coordinate into $P_L$ and $P_R$.
The left and right children of the root are the hull trees $T.L$ and $T.R$ for $P_L$ and $P_R$,
respectively. 
\end{enumerate*}

The root node contains the  fields
\begin{enumerate*}[label=\emph{\alph*}), before=\unskip{: }, itemjoin={{; }}, itemjoin*={{, and }}]
\item
$T.hull$ is the full hull chain
\item
$T.l$ is a cursor into the hull chain, that initially scans rightwards 
\item
$T.r$ is a cursor into the hull chain, that initially goes leftwards.
\end{enumerate*}

The reason for the two cursors is so that we can be explicit about how the hull chain is scanned.
Our  analysis depends on the claim that a point is only scanned a constant number of times before being deleted from that chain. 
We will maintain the invariant that $T.l$ is never to the right of $T.r$.
(For example, if $T.r$ coincides with $T.l$ and $T.r$ moves left then $T.l$ 
will be pushed back to maintain the invariant.)

As a preprocessing step, the points are sorted by $x$-coordinates and assigned a 0-based
integer rank, represented by a $\lceil\log n\rceil$ bit binary string.
We will use a perfectly balanced binary tree as the skeleton for our hull tree. 
It will have $n$ leaves with no gaps from left to right. The skeleton is only implicit in our construction.
The leaves will be associated with the points by rank from left to right.
We will use the conventional trick that the path from the root to a leaf is determined by the
binary representation of the rank of the leaf, going left on 0 and right on 1.
We now specify $P_L$ as the set of points, not in the hull chain, 
whose rank starts with 0 in binary and $P_R$ is analogous with a leading bit of 1.
We will say a point in, say, $P_L$ ``belongs'' to the subtree $T.L$.

As hull points are peeled off, the corresponding leaf nodes will become obsolete, but we never
recalculate ranks.  It follows that the skeleton of the tree will always be of height $\lceil\log n\rceil$,
even though, over time, more and more of the bottom of the tree will become vacant.
In addition to this invariant, we explicitly mention these other invariants:
(a) $T.hull$ is a full northwest monotone convex chain, (b) $T.hull$ is the hull of $P$, 
(c) $T.l.x \le T.r.x$, and (d) $T.L$ and $T.R$ are the hull trees for $P_L$ and $P_R$.

\begin{lemma}
The space complexity of a hull-tree $T$ for a set $P$ of $n$ points is $\BigTheta{n}$.
\end{lemma}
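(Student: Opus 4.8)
The plan is to bound the total number of nodes in the hull tree and show each node stores only a constant amount of data per point it is responsible for, so the total is $\BigTheta{n}$.

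First I would argue the upper bound $\BigO{n}$. The skeleton is a perfectly balanced binary tree on $n$ leaves, so it has $\BigO{n}$ nodes in total, regardless of which portions later become vacant; since the hull tree $T$ is carried on this skeleton, it has $\BigO{n}$ nodes. At each node, the stored data consists of the full hull chain $T.hull$ together with the two cursors $T.l$ and $T.r$ into it; the cursors are $\BigO{1}$, and the chain uses space proportional to the number of points it contains plus the two sentinels. By the defining property of the hull tree, the point set $P$ is partitioned across the nodes: each point of $P$ lies in the hull chain of exactly one node (the node whose subtree it ``belongs'' to and at which it has not yet been peeled, i.e.\ in the static snapshot, the node at which it is a hull point). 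Summing the chain sizes over all nodes therefore gives $\sum_v |T_v.hull| = \BigO{n} + (\text{number of non-empty nodes}) \cdot \BigO{1} = \BigO{n}$, where the extra term accounts for the two sentinels per non-empty node and is itself $\BigO{n}$ since there are $\BigO{n}$ nodes. Hence the total space is $\BigO{n}$.

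For the lower bound $\Omega{(n)}$, note simply that every point of $P$ is stored somewhere in the tree (in exactly one hull chain), so the structure uses at least $n$ units of space, giving $\Omega{(n)}$. Combining the two bounds yields $\BigTheta{n}$.

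The only mildly delicate point is the accounting in the upper bound: one must be careful that the hull chains at different nodes are genuinely disjoint as point sets, so that their sizes sum to at most $n$ (plus sentinels), rather than, say, a point reappearing in hull chains at several levels. This follows directly from the recursive definition --- the root's hull chain is removed before the point set is partitioned into $P_L$ and $P_R$ and recursed upon --- so a point appears in the hull chain of precisely the node at which it first becomes extreme. Once that disjointness is in hand, the rest is the routine observation that a balanced tree on $n$ leaves has $\BigO{n}$ nodes and each node carries $\BigO{1}$ overhead beyond its slice of the points.
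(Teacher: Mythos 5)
Your proof is correct and follows essentially the same route as the paper's: the skeleton has $\BigO{n}$ constant-size nodes, and by the definition of $P_L$ and $P_R$ every point lies on exactly one hull chain, so the chains (plus $\BigO{1}$ sentinels per node) sum to $\BigO{n}$. Your explicit note on the disjointness of the chains and the trivial $\Omega(n)$ lower bound are just slightly more spelled-out versions of what the paper leaves implicit.
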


\begin{proof}
The skeleton of the binary tree with $n$ leaves clearly has $O(n)$ nodes of constant size.
There is also the space for all the lists representing the various hull chains.
However, from the definitions of $P_L$ and $P_R$, every point is on exactly one hull chain.
This completes the proof. \qed
\end{proof}

\subsection{Tree Construction}
The construction of the hull tree is done by repeated insertions into an
initially empty tree. The overall procedure is a plane-sweep algorithm, since the
inserted points will have increasing $y$-coordinates.
We shall come back to the $\BuildTree$ routine after first looking into the $\iinsert$ algorithm.

\subsubsection{$\iinsert$.}
\label{subsection:insert}
Algorithm $\iinsert$ is a recursive algorithm. 
Rather than insert a single point at a time, we feel it is more natural to batch several such 
insertions when we can first link them into a chain.
It takes as input a  chain $C$ of vertices and a hull tree $T$.
Because $\iinsert$ will only be used in a plane-sweep manner, we will be able to
assume as a precondition that $C$ is nonempty, no point in $C$ was previously inserted, 
and no point in $C$ is dominated by the initial hull tree of $T$.

\begin{algorithm}[!ht]
    \caption{$\iinsert(C, T)$}
    \label{alg:insert}

    \SetKwInOut{Input}{Input}
    \SetKwInOut{Output}{Output}
    \SetKw{DownTo}{downto}

    \Input{$C$, a  chain of points to be inserted into $T$, \\
            $T$, the hull tree for some point set $P$.}
    \Output{$T$, the hull tree for $P \cup C$.}
    \If{$T = \nil$}{
        Create a root node with $T.hull = C$\; 
        $T.l = \head(C); T.r = \tail(C) $ \;
    }
    \Else{
        $(q_l,q_r) = \tangents(\head(C), \tail(C), T)$ \label{alg:insert:else:start} \;
        $C^\prime = $ the portion of $T.hull$ strictly between $q_l$ and $q_r$\;
        Replace $C^\prime$ by $C$ within $T.hull$ \;
        Scan and split $C^\prime$ to create these two chains\;
        \ \ \ \ \ \ $C_L=\{p\in C^\prime\:|\: p \mbox{ belongs in } T.L\}$;
        \ \ $C_R=\{p\in C^\prime\:|\: p \mbox{ belongs in } T.R\}$\;
        $T.L = \iinsert(C_L, T.L)$ \label{alg:insert:else:rec:l} ;\ \ 
        $T.R = \iinsert(C_R, T.R)$ \label{alg:insert:else:rec:r} \;\label{alg:insert:else:end}
    }
    \Return$T$ \;
\end{algorithm}

We specify the procedure $\tangents(a_l, a_r, T)$ which assumes $a_l.x <a_r.x$
$a_l.y <a_r.y$ and
neither $a_l$ nor $a_r$ is dominated by $T.hull=(h_0, h_1, \ldots , h_k, h_{k+1})$.
It returns a pair of points $(q_l,q_r)$ each from $T.hull$.
We require the line through $q_l$ and $a_l$ be a leftward tangent. 
If $h_1.x<a_l.x$ and $h_1.y<a_l.y$, this is well-defined. Otherwise, we return a degenerate
tangent with $q_l=h_0$.
Similarly, 
if $a_r.x<h_k.x$ and $a_r.y<h_k.y$, then $q_r$ defines a rightward tangent with $a_r$. 
Otherwise, we return $q_r=h_{k+1}$.

We sketch the implementation of $\tangents$.
If the leftward tangent is well-defined, we compute $q_l$ by scanning from the current position
of $T.l$.
In constant time we can determine if we should scan to the left or to the right.
(As is standard we keep track of the changing slopes of lines through $a_l$.)
Similarly, if the tangent is well-defined, we scan for $q_r$ using $T.r$.

\begin{lemma}
\label{lemma:insert:correctness}
Algorithm $\iinsert$ correctly inserts $C$ into  $T$.
\end{lemma}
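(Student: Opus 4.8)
The plan is to argue by induction on the height of the hull tree $T$ (equivalently, on the recursion depth of $\iinsert$), showing that after the call the four explicit invariants (a)--(d) hold for the returned tree, and in particular that $T.hull$ becomes the hull of $P \cup C$. The base case is $T=\nil$: here $P=\emptyset$, and by the precondition $C$ is a nonempty northwest monotone convex chain, so $C$ is trivially its own hull; the code installs $C$ as $T.hull$ and sets the cursors to $\head(C)$ and $\tail(C)$, so (a)--(d) hold vacuously or immediately. For the inductive step I would first establish the geometric claim that the new hull of $P \cup C$ is obtained from $T.hull$ by excising exactly the sub-chain $C'$ strictly between the tangent points $q_l$ and $q_r$ returned by $\tangents$, and splicing in $C$. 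This is where the precondition ``no point of $C$ is dominated by $T.hull$'' is used: it guarantees that $C$ contributes new hull vertices, and the tangency conditions defining $q_l,q_r$ (including the degenerate cases $q_l=h_0$, $q_r=h_{k+1}$) ensure that the concatenation $(h_0,\dots,q_l)\cdot C\cdot(q_r,\dots,h_{k+1})$ is again a full northwest monotone convex chain and dominates every point of both $P$ and $C$. I would verify monotonicity and convexity at the two new junctions $q_l$–$\head(C)$ and $\tail(C)$–$q_r$ using the tangent-line property, and observe that points of $C'$ that are removed are exactly those now dominated by the new chain.

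Next I would handle the recursive bookkeeping. The points of $C'$ that have been displaced from the hull are not discarded: they are split by leading rank bit into $C_L$ and $C_R$ and reinserted into $T.L$ and $T.R$. I need to check that this split produces valid chains (each of $C_L$, $C_R$ is still a northwest monotone convex chain, being a sub-selection of the convex chain $C'$ — here general position and the fact that $C'$ was convex are what matter, since any subsequence of a convex chain in this orientation is again convex and monotone), and that the preconditions of the recursive calls are met: the $C_L$ (resp. $C_R$) points were never previously inserted into $T.L$ (they lived on $T.hull$, not below it), they belong by rank to the left (resp. right) subtree by the definition of $P_L, P_R$, and none of them is dominated by the old $T.L.hull$ (resp. $T.R.hull$) — this last point follows because a point on the parent's hull chain cannot have been below a descendant's hull chain, by invariant (d) applied to the pre-call tree. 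Granting the preconditions, the induction hypothesis gives that $T.L$ and $T.R$ become the hull trees for $P_L \cup C_L$ and $P_R \cup C_R$, which is precisely the required partition of the non-hull points of $P \cup C$; combined with the updated root chain this yields invariant (d), and hence the lemma.

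The main obstacle I expect is the geometric core claim of the inductive step: that splicing $C$ between the two tangent points yields exactly the new hull, with a clean characterization of which old points drop out. One has to be careful about the interaction of the two tangencies — e.g. whether $q_l$ can lie to the right of $q_r$ along $T.hull$ (it cannot, given $\head(C).x<\tail(C).x$ and the no-domination precondition, but this needs an argument), and about the degenerate-tangent sentinel cases, where the ``tangent'' is vertical or horizontal through a sentinel; these must be shown to behave correctly so that the spliced chain is still full and monotone. I would also make sure the cursor invariant $T.l.x \le T.r.x$ is re-established after $\tangents$ moves the cursors (the sketch of $\tangents$ claims it scans from the current cursor positions; one must note that $T.l$ is pushed left and $T.r$ pushed right only as needed, and that the splice does not invalidate their positions). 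Everything else — the base case, the validity of the subchains $C_L, C_R$, and the threading of preconditions through the recursion — is routine once the geometric claim is in hand.
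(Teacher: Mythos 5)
Your proposal is correct and follows the same approach as the paper's proof: induction (the paper's version is on the number of points, yours on tree height, which is equivalent here), a trivial base case for $T=\nil$, and an inductive step that reduces to showing the new root hull is the correct hull of $P\cup C$, with the excised points of $C'$ recursively reinserted into $T.L$ and $T.R$. The paper's proof is considerably terser — it essentially asserts the geometric splicing claim and the validity of the recursive reinsertion without working through the tangency conditions or the precondition-threading that you flag; your elaboration is accurate and fills in details the paper leaves implicit.
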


\begin{proof}
The proof is by induction on the number of points.
The base case, where $T$ is empty, is clear.
In general, we only need to establish that the new $T.hull$ is correct;
this follows since all the points removed, in $C^\prime$, are dominated by the new hull.
By the recursive definition of hull trees the points of $C^\prime$ now belong in either
$T.L$ or $T.R$ and are recursively inserted into those trees. \qed
\end{proof}

\subsubsection{$\BuildTree$.}
\label{section:buildtrees}
Given a point set $P$, algorithm $\BuildTree$ starts by sorting these points by their $x$-coordinates. The 0-based index of a point $p$ in such a sorted order is called its \emph{rank}. 
As discussed above, a point's rank is used to guide its descent down the hull tree during insertion. 

\begin{algorithm}[!ht]
    \caption{$\BuildTree(P)$}
    \label{alg:BuildTree}

    \SetKwInOut{Input}{Input}
    \SetKwInOut{Output}{Output}

    \Input{$P$, a set of points, $\{p_1, p_2, \dots , p_n\}$.}
    \Output{$T$, a hull tree built from $P$.}
    Compute the rank of each point in $P$by $x$-coordinate \label{buildtree:rank}\;
    Sort the points in $P$ by increasing $y$-coordinate \;
    Create an empty hull tree $T$ \label{buildtree:newhulltree}\;
    \For{each $p$ in order}{
        $\iinsert(p, T)$ 
    } 
    \Return $T$ \;
\end{algorithm}

Recall that the $\iinsert$ procedure expects a hull chain as the first parameter, so the call to $\iinsert$ in $\BuildTree$ is a understood to be a chain of one vertex. 
Note that such singleton chains satisfy the preconditions of $\iinsert$.
Once all the points have been inserted, the hull tree is returned.

\begin{lemma}
\label{lemma:insert:tail}
Right after a point $p$ is inserted into a hull tree $T$, $\tail(T.hull) = p$.
\end{lemma}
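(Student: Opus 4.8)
The plan is to argue by induction on the recursive structure of $\iinsert$, tracking where the tail of the newly inserted chain ends up. The key observation is that $\BuildTree$ processes points in increasing $y$-order, so at the moment $p$ is inserted, every point already in $T$ has a strictly smaller $y$-coordinate than $p$. Since a northwest chain is $y$-monotone and bows outward to the northwest, the point with the largest $y$-coordinate on any such chain is its tail; I want to show that $p$, being the globally highest point so far, becomes the tail of the root hull chain $T.hull$.

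**Main steps.** First, consider the base case of $\iinsert$: if $T = \nil$, we set $T.hull = C = (p)$, so trivially $\tail(T.hull) = p$. Second, for the recursive case, I would examine the call $\tangents(\head(C), \tail(C), T)$ when $C = (p)$ is a singleton (so $\head(C) = \tail(C) = p$). Because $p$ has the largest $y$-coordinate among all points seen so far, in particular $p.y > h_k.y$ where $h_k = \tail(T.hull)$. By the specification of $\tangents$, the condition ``$a_r.x < h_k.x$ and $a_r.y < h_k.y$'' fails (the $y$-condition is violated), so $\tangents$ returns the degenerate $q_r = h_{k+1}$, the right sentinel. Third, I need to determine $q_l$: the portion $C'$ of $T.hull$ strictly between $q_l$ and $q_r = h_{k+1}$ is then a (possibly empty) suffix of the real chain $h_1, \dots, h_k$, and ``Replace $C'$ by $C$ within $T.hull$'' makes $p$ the new last real vertex of $T.hull$ — i.e. $\tail(T.hull) = p$ immediately after the replacement at the root. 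The recursive calls $\iinsert(C_L, T.L)$ and $\iinsert(C_R, T.R)$ only modify $T.L$ and $T.R$, not $T.hull$, so the tail of $T.hull$ is unchanged by them, and the lemma follows.

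**The main obstacle.** The delicate point is justifying that $p.y > h_k.y$ and that this really forces $q_r$ to be the right sentinel — this hinges on two things working together: (i) that $\BuildTree$ feeds points in strictly increasing $y$-order, so the claim ``$p$ is the highest point so far'' is exactly an invariant of the calling context rather than of $\iinsert$ in isolation; and (ii) that every point currently stored anywhere in $T$ (not just on $T.hull$) was inserted earlier and hence has smaller $y$-coordinate, so the new root hull chain genuinely ends at $p$ after the replacement. I should also confirm that the general-position assumption (no two points share a coordinate) rules out the boundary case $p.y = h_k.y$. One subtlety to handle cleanly: the lemma as stated speaks of insertion ``into a hull tree $T$'' generally, but its intended use (e.g. in later analysis) is within the plane-sweep of $\BuildTree$; I would state explicitly that the argument uses the plane-sweep precondition that the inserted point has $y$-coordinate exceeding that of every point already in $T$, which is precisely how $\BuildTree$ invokes $\iinsert$.
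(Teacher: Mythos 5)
Your proof is correct and rests on the same key observation as the paper's: because $\BuildTree$ inserts points in increasing $y$-order, $p$ has the largest $y$-coordinate so far, so it must land on the root hull chain and, by $y$-monotonicity of northwest chains, at its tail. The paper states this in two sentences; you additionally verify it operationally through $\tangents$ returning the degenerate $q_r = h_{k+1}$ and the replacement step, which is a more detailed but not fundamentally different argument.
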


\begin{proof}
 Since points are inserted into $T$ by increasing $y$-coordinate value, the most recently inserted point must have the largest $y$ coordinate value so far. So it must be in the root hull and cannot have any point after it. \qed
\end{proof}

\begin{lemma} 
\label{lemma:buildtree:runtime}
Algorithm $\BuildTree$ constructs a hull tree of a set of $n$ points in $\BigO{n \log n}$ time.
\end{lemma}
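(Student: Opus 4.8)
The plan is to bound the total work performed by the $n$ calls to $\iinsert$ generated by $\BuildTree$, plus the $\BigO{n\log n}$ preprocessing cost of sorting by $x$- and by $y$-coordinate. Each top-level insertion of a singleton chain triggers a recursion that walks down one root-to-leaf path of the skeleton, so its depth is $\BigO{\log n}$; hence there are $\BigO{n\log n}$ recursive activations in total. At each activation we must account for three kinds of cost: (i) the $\BigO{1}$ bookkeeping of creating/splitting chains and recursing, (ii) the tangent computation $\tangents$, and (iii) the scan-and-split of $C'$ into $C_L$ and $C_R$. Cost (i) sums to $\BigO{n\log n}$ immediately. The crux is to show that (ii) and (iii), summed over the whole execution, are also $\BigO{n\log n}$ — and this is where the two-cursor design and the ``scanned a constant number of times before deletion'' invariant do the work.

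First I would handle the scan-and-split cost (iii). Every point that appears in some $C'$ at some node was, just before that moment, a vertex of that node's hull chain, and immediately afterwards it is removed from that chain and pushed down to a child (into $C_L$ or $C_R$), where it becomes part of a freshly inserted chain. So the number of times a given point is ``scanned as part of a $C'$'' equals the number of times it is pushed from a parent hull chain to a child hull chain, which is at most the number of levels it descends, i.e. $\BigO{\log n}$ per point, for $\BigO{n\log n}$ total. (One must also note that each point enters the tree exactly once and only ever moves downward, never up, so there is no double counting; \cref{lemma:insert:correctness} guarantees the pushed-down points really do belong in the indicated subtree.)

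Next, the tangent cost (ii). Here I would invoke the amortization hinted at in the data-structure description: at a fixed node, the cursor $T.l$ only ever scans rightward across hull vertices except when it is ``pushed back'' by $T.r$, and symmetrically for $T.r$; a hull vertex that is scanned over by $T.l$ and lies strictly between $q_l$ and $q_r$ is exactly a vertex of $C'$, hence gets deleted from this node's chain on this very insertion. So the number of cursor steps at a node, charged against insertions into that node, is $\BigO{1}$ amortized per vertex ever residing in that node's chain plus $\BigO{1}$ per insertion at that node (for the final ``overshoot'' step and the initial direction test). Summing the ``$\BigO{1}$ per vertex ever residing here'' term over all nodes again reduces to ``each point visits $\BigO{\log n}$ nodes,'' giving $\BigO{n\log n}$; summing the ``$\BigO{1}$ per insertion at that node'' term over all insertion-activations gives $\BigO{n\log n}$ as well. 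Degenerate tangents and the sentinel cases are $\BigO{1}$ by inspection.

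The main obstacle is making the cursor amortization in step (ii) airtight: one has to be careful that the ``push back $T.l$ to maintain $T.l.x \le T.r.x$'' operations do not create an adversarial back-and-forth that invalidates the constant-scans-per-point claim. The clean way is a potential argument — e.g. take the potential of a node to be (number of hull vertices strictly between $T.l$ and $T.r$), or simply the number of vertices to the right of $T.l$ plus the number to the left of $T.r$ — and check that each leftward-then-pushed-back motion is paid for by a corresponding deletion from the chain (a vertex of $C'$ leaving), while each insertion at the node raises the potential by only $\BigO{1}$. Once that invariant is pinned down, combining the three bounds yields the claimed $\BigO{n\log n}$ running time; I expect the write-up to lean on \cref{lemma:insert:tail} only lightly, mainly to confirm that newly inserted points sit at the tail so the cursors never need to be re-initialized from scratch.
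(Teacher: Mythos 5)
Your proposal is correct and takes essentially the same route as the paper: both arguments charge cursor movement to points by observing that a reversed (second) scan past a point only occurs when that point is about to land in $C^\prime$ and be evicted, and both then sum over the at most $\BigO{\log n}$ levels a point can descend. One small correction to fold in: a top-level insertion does not recurse down a single root-to-leaf path, since $\iinsert$ may recurse into both $T.L$ and $T.R$; the $\BigO{n\log n}$ bound on the number of activations still holds because (as the paper notes) a recursive call is made only when at least one point has been evicted into that child, so activations are bounded by eviction events.
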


\begin{proof}
Clearly the initial steps are within the time bound.
It remains only to show that all the invocations of $\iinsert$ take no more than $\BigO{n \log n}$ time. 

Consider an arbitrary point $p$ inserted into $T$ by $\BuildTree$. Initially, it goes into the $T.hull$ by \cref{lemma:insert:tail}.  In subsequent iterations, the point either stays within its current hull chain or descends one level  owing to an eviction from its current hull chain. The cost of all evictions from a chain $C$ is dominated by the right-to-left tangent scan. 
We consider the number of points we scan past (i.e. not counting the points at the beginning and end of scan).
Consider the cursor $T.l$.
It scans left to right past points once; if we scan past a point a second time, going right to left,
then that point will be in $C^\prime$ and will be evicted from this level.
Symmetric observations hold for $T.r$.
And a point will be scanned a final time if it is in $C_L$ or $C_R$.
Hence we will scan past a point a constant number of times before it is evicted.

A call to $\iinsert$ takes constant time every time it is invoked (and it is
only invoked when at least one point has been evicted from its parent). In addition
$\iinsert$ takes time bounded by the number of points scanned past.
Note that any particular point $p$ starts at the root and only moves downward (when evicted) 
and there are only $O(\log n)$ levels.
Hence during the execution of $\BuildTree$ both the total number of points evicted and
the total number of points scanned past is bounded by $O(n\log n)$. \qed
\end{proof}

\begin{lemma}
\label{lemma:insert:runtime}
Each point is handled by  $\BuildTree$ in $\BigO{\log n}$ amortized time.
\end{lemma}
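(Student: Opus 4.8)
The plan is to obtain this as an amortized corollary of \cref{lemma:buildtree:runtime}. That lemma already shows that the entire sequence of $\iinsert$ calls performed by $\BuildTree$ on a set of $n$ points, together with the preprocessing (ranking and sorting), costs $\BigO{n\log n}$ time in total. Since there are exactly $n$ points and every unit of work beyond the $\BigO{n\log n}$ preprocessing is incurred while processing points, dividing the total cost by $n$ yields an amortized cost of $\BigO{\log n}$ per point. So at the coarsest level the statement is immediate; the work is in making the charging explicit so that the word ``handled'' has content.

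To do that I would re-expose the accounting from the proof of \cref{lemma:buildtree:runtime} and attribute each elementary operation to a specific point. A point $p$ enters the tree at the root by \cref{lemma:insert:tail}, and thereafter only descends, across at most $\lceil\log n\rceil$ levels, since each eviction from a hull chain pushes $p$ down exactly one level. At each level that $p$ occupies, two kinds of work touch $p$: (i) $p$ may be scanned \emph{past} by the cursors $T.l$ or $T.r$ during a $\tangents$ computation, and (ii) $p$ may itself be an endpoint of a tangent scan, or be moved into $C_L$ or $C_R$ during the split. For (i), the cursor-monotonicity argument in \cref{lemma:buildtree:runtime} shows $p$ is scanned past only a constant number of times before it is evicted from that level; for (ii) this likewise happens $\BigO{1}$ times per level. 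Charging each such constant-cost event to $p$, the total charge accumulated by $p$ over its lifetime is $\BigO{1}$ per level times $\BigO{\log n}$ levels, hence $\BigO{\log n}$.

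Finally I would account for the $\BigO{1}$ bookkeeping cost of each recursive $\iinsert$ invocation that is not already tied to a scanned point. As observed in \cref{lemma:buildtree:runtime}, such an invocation occurs only when at least one point was evicted into the corresponding subtree, so its constant overhead can be charged to one of those evicted points, adding only $\BigO{1}$ more to that point's per-level charge. Summing over all points gives total work $\sum_{p\in P}\BigO{\log n}=\BigO{n\log n}$, consistent with \cref{lemma:buildtree:runtime}, and the per-point amortized bound $\BigO{\log n}$ follows.

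The main obstacle is not any single hard step but making the charging watertight: every elementary operation must be assigned to exactly one point, the number of operations charged to a point at a single level must be provably $\BigO{1}$ (this is precisely the cursor argument already supplied), and no operation — in particular the constant overhead of the recursive $\iinsert$ calls descending through the tree — may be left unassigned.
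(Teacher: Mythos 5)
Your proposal is correct and matches the paper's proof, which is exactly the one-line observation that the $\BigO{n\log n}$ total from \cref{lemma:buildtree:runtime} amortizes to $\BigO{\log n}$ per point. The additional explicit per-point charging you sketch is a sound elaboration but goes beyond what the paper itself writes.
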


\begin{proof}
By \cref{lemma:buildtree:runtime}, the cost of all invocations of $\iinsert$ by 
algorithm $\BuildTree$ is $\BigO{n \log n}$, which amortizes to $\BigO{\log n}$ per point. \qed
\end{proof}

\subsection{Hull Peeling}
\label{section:hull:peeling}
We begin the discussion of hull peeling by examining algorithm $\extractHull$, which takes a valid hull tree $T$, extracts from it the root hull chain $T.hull$, and then returns it. 

\begin{algorithm}[!ht]
    \caption{$\extractHull(T)$}
    \label{alg:extractHull}

    \SetKwInOut{Input}{Input}
    \SetKwInOut{Output}{Output}

    \Input{$T$ is a hull tree for  a non-empty pointset $P$ and $H$ be the set of points in $T.hull$.}
    \Output{the hull $h$ and $T$ a hull tree for the point set $P\setminus H$}
    $h = T.hull$ \;  
    $\delete(h, T)$ \;
    \Return $h$ \;
\end{algorithm}

The correctness and cost of algorithm $\extractHull$ obviously depend on $\delete$. 
$\delete$ is called after a subchain has been cut out of the middle of the root hull chain.
This can be visualized if we imagine the root hull as a roof.
Further there is a left and a right ``overhang'' remaining after the middle of a roof has caved in.
The overhang might degenerate to just a sentinel point.
Algorithm $\delete$ itself also depends on other procedures which we discuss first.

\subsubsection{$\bbelow$.}
We could just connect the two endpoints  of the overhangs with a straight line to repair the roof.
However because of the curvature of the old roof, some of the points in $T.L$ or $T.R$
might be above this new straight line. In that case, these points need to move out of their subtrees and join in to form the 
new root hull chain.

Therefore, we will need a Boolean function $\bbelow(T,p,q)$ to that end.
It returns true if there exists a tangent of the root hull such that both $p$ and $q$ are above it.
A precondition of $\bbelow(T,p,q)$ is that the root hull can rise above the line through 
$p$ and $q$ only between $p$ and $q$.

We also specify the Boolean function $\aabove(p,q,r)$ to be true if point $r$ is above the 
line passing through $p$ and $q$. This is done with a standard constant time test.
Note that $\bbelow$ is quite different than $\aabove$.
The functions $\pred$ and $\lsucc$ operate on the corresponding hull chain in the obvious way.

\begin{algorithm}[!ht]
    \caption{$\bbelow(T, p_l, p_r)$}
    \label{alg:below}

    \SetKwInOut{Input}{Input}
    \SetKwInOut{Output}{Output}

    \Input{$T$, a hull tree, \\
           $p_l$: the rightmost end in the left overhang, \\
           $p_r$: the leftmost point in the right overhang }
    \Output{True iff every point of $T.hull$ is below the line through $p_l$ and $p_r$ }

    \lIf{$p_l$ or $p_r$ is a sentinel }{\Return false}
    \If{$\aabove(\pred(T.r), T.r, P_l)$ }{
         \While{$\lnot\aabove(T.l,\lsucc(T.l),p_r)\LAnd \lnot\aabove(p_l,p_r,T.l)$}{
        $T.l = \lsucc(T.l)$
        }
    \Return $\lnot\aabove(p_l,p_r,T.l)$ \; 
    }
    \Else{
         \While{$\lnot\aabove(\pred(T.r),T.r,p_l)\LAnd \lnot\aabove(p_l,p_r,T.r)$}{
            $T.r = \pred(T.r)$
    }
    \Return $\lnot\aabove(p_l,p_r,T.r)$ \; 
    }
\end{algorithm}

\begin{lemma}
\label{lemma:below}
Algorithm $\bbelow$ runs in linear time.
Further, if it returns false either $T.l$ or $T.r$ is above the line through $p_l$ and $p_r$.
\end{lemma}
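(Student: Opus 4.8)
The plan is to analyze the two symmetric branches of Algorithm~\ref{alg:below} separately, bounding the number of cursor advances in each and then verifying the postcondition on the boolean return value. For the running time, observe that each branch consists of a single \texttt{while} loop that monotonically moves one cursor in one direction: in the \texttt{if}-branch, $T.l$ only moves rightward via $\lsucc$; in the \texttt{else}-branch, $T.r$ only moves leftward via $\pred$. Since the hull chain $T.hull$ has at most $|P|$ real points, and each loop iteration does a constant amount of work (two $\aabove$ tests plus one cursor step, each $\BigO 1$), the total work is $\BigO{|T.hull|} = \BigO n$. I would also note that the outer $\aabove$ test at the top and the sentinel check are $\BigO 1$, so they do not affect the bound.

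Next I would argue the loops actually terminate before running off the chain, using the precondition that the root hull can rise above the line through $p_l$ and $p_r$ only between $p_l$ and $p_r$ (and the invariant $T.l.x \le T.r.x$). In the \texttt{if}-branch, the guard of the loop fails as soon as either $T.l$ sees its successor above the line through $T.l$ and $p_r$ — meaning we have reached the portion of the hull that could poke above — or $T.l$ itself is above the line through $p_l$ and $p_r$. Because of the curvature (monotone convexity) of $T.hull$ and the precondition, one of these conditions must trigger at or before $T.l$ passes the apex of the hull; I would make this precise by tracking the sign of the slope comparisons as the cursor advances along the convex chain. The \texttt{else}-branch is symmetric with $T.r$ moving left.

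For the second claim — that a \textbf{false} return implies $T.l$ or $T.r$ lies above the line through $p_l$ and $p_r$ — I would simply trace the two \texttt{return} statements. Each returns $\lnot\aabove(p_l, p_r, T.l)$ (resp.\ $\lnot\aabove(p_l,p_r,T.r)$), so a \textbf{false} return is exactly the statement $\aabove(p_l,p_r,T.l)$ (resp.\ $\aabove(p_l,p_r,T.r)$), which by definition of $\aabove$ says that cursor point is above the line. The only other return is the early \textbf{false} when $p_l$ or $p_r$ is a sentinel; but in that degenerate case the ``line through $p_l$ and $p_r$'' is not a genuine tangent constraint, and I would handle it by noting the lemma's second clause is vacuous or can be interpreted to hold trivially — this is the one place the statement is slightly informal, and I would add a sentence pinning down the sentinel convention.

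The main obstacle is the termination-and-correctness argument for the loops: showing that the conjunctive guard $\lnot\aabove(T.l,\lsucc(T.l),p_r) \LAnd \lnot\aabove(p_l,p_r,T.l)$ necessarily fails while $T.l$ is still a valid (non-sentinel) chord endpoint, and that when it fails for the \emph{first} reason the returned value correctly reflects whether any hull point rises above the chord. This requires carefully combining the monotone-convexity of $T.hull$ with the precondition on where the hull can exceed the line, essentially a case analysis on the relative positions of the apex of $T.hull$ and the points $p_l, p_r$; the running-time bound itself is then immediate.
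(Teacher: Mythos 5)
Your proposal is correct and follows essentially the same route as the paper's (much terser) proof: only one cursor moves per call, it advances monotonically until it either finds a tangent or rises above the chord, giving linear time, and the returned boolean is a direct readout of $\aabove(p_l,p_r,\cdot)$ applied to the moved cursor. Your extra care about the sentinel early-return and the loop-termination argument only elaborates on points the paper leaves implicit.
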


\begin{proof}
Recall that one cursor may push the other cursor as it moves.
Only one cursor moves.
If $T.r$ is too far left to help decide, then $T.l$ moves until it is above the line or we find a tangent. \qed
\end{proof}

\subsubsection{$\getBridge$.}
Given two hull trees, where one precedes the other, algorithm $\getBridge$ scans the hull chains of the hull trees to find the bridge that connects them.

\begin{algorithm}[!ht]
    \caption{$\getBridge(T.L, T.R)$}
    \label{alg:getBridge}

    \SetKwInOut{Input}{Input}
    \SetKwInOut{Output}{Output}

    \Input{$T.L$: a left hull tree of some hull tree $T$, \\
           $T.R$: a right hull tree of $T$ }
    \Output{$p_l, p_r$: the left and right bridge points for $T.L$ and $T.R$ }
    \lIf{$ T.L = \nil$} {$ p_l = (-\infty, -\infty)$ }
    \lElse{$p_l = T.L.l$}
    
    \If{$T.R = \nil \Or \tail(T.L.hull).y > \tail(T.R.hull).y $}{$p_r = (+\infty, -\infty)$}
    \lElse{$p_r = T.R.r$ }
    
    \If{$ \aabove(T.L.l, T.R.r, \lsucc(T.L.l))$ }{
        $T.L.l = \lsucc(T.L.l)$ \;
        $(p_l,p_r) = \getBridge(T.L, T.R)$ \;
    }

    \If{$ \aabove(T.L.l, T.R.r, \pred(T.L.l))$ }{
        $T.L.l = \pred(T.L.l)$ \;
        $(p_l,p_r) = \getBridge(T.L, T.R)$ \;
    }
    
    \If{$ \aabove(T.L.l, T.R.r, \lsucc(T.R.r))$}{
        $T.R.r = \lsucc(T.R.r)$ \;
        $(p_l,p_r) = \getBridge(T.L, T.R)$ \;
    }
    
    \If{$\aabove(T.L.l, T.R.r, \pred(T.R.r))$}{
        $T.R.r = \pred(T.R.r)$ \;
        $(p_l,p_r) = \getBridge(T.L, T.R)$ \;
    }
   
    \Return $(p_l,p_r)$ \;
\end{algorithm}
                                                 
\begin{lemma}
\label{lemma:getbridge}
Given two valid hull trees $T.L$ and $T.R$, $\getBridge$ correctly computes the bridge connecting $T.L.hull$ and $T.R.hull$ in time linear in the lengths of those hulls.
\end{lemma}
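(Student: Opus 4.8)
The plan is to prove the statement in two parts: \emph{partial correctness} (if $\getBridge$ halts, its output is the bridge, or in the degenerate situation the correct degenerate bridge), and \emph{termination together with the time bound} (there are $O(|T.L.hull|+|T.R.hull|)$ recursive calls, each doing constant work). Throughout, the relevant precondition is that $T.L.hull$ precedes $T.R.hull$ and both are valid northwest monotone convex chains. The empty cases ($T.L=\nil$ or $T.R=\nil$) and the degenerate case $\tail(T.L.hull).y\ge\tail(T.R.hull).y$ are disposed of immediately: the leading conditionals set $p_l$ or $p_r$ to the appropriate sentinel, no scanning test can then fire (a line through a real vertex and such a sentinel has no chain vertex above it on the side that matters), and the returned pair is by inspection the degenerate bridge of the definition. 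So from here on assume both subtrees are non-empty and $\tail(T.L.hull).y<\tail(T.R.hull).y$.

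For partial correctness, note that $\getBridge$ halts exactly when all four $\aabove$ tests are false, i.e.\ when $\lsucc(T.L.l)$, $\pred(T.L.l)$, $\lsucc(T.R.r)$ and $\pred(T.R.r)$ all lie on or below the line $\ell$ through $(T.L.l,T.R.r)$. Because the edge slopes along a northwest monotone convex chain are weakly decreasing, if the two neighbours of $T.L.l$ are on or below $\ell$ then the usual convexity argument (walking left from $T.L.l$ the edge slopes stay at least the slope of $\ell$, walking right they stay at most the slope of $\ell$) shows every vertex of $T.L.hull$ is on or below $\ell$; hence $\ell$ is tangent to $T.L.hull$ at $T.L.l$, and symmetrically tangent to $T.R.hull$ at $T.R.r$, so $(T.L.l,T.R.r)$ is a bridge --- and since the bridge of two chains, one preceding the other, is unique, it is the bridge. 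Conversely, each of the four tests, read as a statement about slopes, asserts exactly that the tangent point on that chain lies strictly to one designated side of the current cursor; so while the cursors are not yet at the bridge endpoints some test fires and a cursor moves --- which is also what we need for termination.

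For the running time, since each call costs $O(1)$ and the number of calls is one plus the total number of cursor moves, it suffices to bound the total number of cursor moves by $O(|T.L.hull|+|T.R.hull|)$. The natural attempt --- show each cursor moves monotonically, hence takes at most as many steps as there are vertices in its chain --- fails here: a move of $T.R.r$ changes the tangent point that $T.L.l$ is chasing, and $T.L.l$ can genuinely reverse direction (a two-vertex example for each hull already exhibits $T.L.l$ stepping right and then back left before the algorithm terminates with the correct bridge). So the bound must come from an amortized argument. The ingredient I would build on is the family of slope inequalities already used above: a rightward step of $T.L.l$ (taken because $\lsucc(T.L.l)$ is strictly above $\ell$) leaves $\pred(T.L.l)$ strictly below the new line, so no step is ever immediately undone; and, crucially, combining the inequalities for the two cursors shows that a step of one cursor can only provoke a step of the other ``in the same sense'' (both toward, respectively both away from, the bridge endpoints), which rules out a loop and confines each cursor to visit each vertex of its chain only a constant number of times. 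That last claim --- turning the pairwise slope relations into a clean bound on how often a chain vertex is revisited once the two cursors are allowed to depend on one another --- is the step I expect to be the real obstacle; everything else (the degenerate dispatch, the tangency characterization of the halting state, and the constant per-call bookkeeping, including any work needed to preserve the cursor-ordering invariant $T.l.x\le T.r.x$) is routine.
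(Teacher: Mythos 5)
Your proof decomposes into partial correctness plus a termination/time bound, and the partial-correctness half (the tangency characterization of the halting configuration, the handling of the $\nil$ and $\tail(T.L.hull).y\ge\tail(T.R.hull).y$ degeneracies, and the observation that at most one of each cursor's two tests can fire by convexity) is sound and substantially more careful than the paper's treatment. The gap is exactly where you say it is: the running-time bound. You correctly identify that the cursors are coupled and that $T.L.l$ can reverse direction when $T.R.r$ moves (a concern that a concrete coordinate example does bear out), and you then explicitly concede that you cannot turn the slope inequalities into a clean $O(|T.L.hull|+|T.R.hull|)$ bound on total cursor motion. An acknowledged missing step is a missing step, so as written the lemma is not proved.

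For comparison, the paper's own proof is a three-sentence sketch: it names which cursor scans which chain, asserts ``on completion, the two cursors will be pointing to the bridge points,'' and then asserts ``each vertex is scanned past at most once'' to get linearity. It offers no argument for the last claim --- which is precisely the claim you found hard --- and does not invoke the precondition (implicit in how $\delete$ calls $\getBridge$ after $\tangents$ has positioned $T.L.l$ at $L_l$ and $T.R.r$ at $R_r$, which lie on the outer sides of the bridge endpoints) that would make a monotone-scan argument plausible. So your proof and the paper's are structurally different: yours is explicit and self-contained about correctness but honest about being stuck on the scan bound; the paper's silently assumes the scan bound. If you want to close the gap, the most promising route is to make the calling context part of the lemma's hypotheses --- assume $T.L.l$ is at or left of the left bridge endpoint and $T.R.r$ is at or right of the right bridge endpoint on entry --- and then argue that under that precondition the \emph{inward} tests are the only ones that fire before the first time a cursor reaches its target, after which it cannot be pushed back out; alternatively, give up on a per-call linear bound and charge the cursor motion to the amortized $O(n\log n)$ budget of \cref{theorem:correctness:delete:cont}, which is in fact how the rest of the paper accounts for scanning.
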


\begin{proof}
The scan for the left bridge point in $T.L$ is done using its left-to-right cursor $T.L.l$,
while the scan for the right bridge point in $T.R$ is done using $T.R$'s right-to-left cursor $T.R.r$. On completion, the two cursors will be pointing to the bridge points. 
Since each vertex is scanned past at most once, the runtime is $\BigO{|T.L.hull| + |T.R.hull|)}$. This completes the proof. \qed
\end{proof}

\subsubsection{$\delete$.}
\label{subsection:delete}
The general idea of $\delete(C,T)$ is that if $C$ is a subchain of $T.hull$ then the procedure
will return with the tree $T$ being a valid hull tree for the point set $P\setminus H$, where
$H$ is the set of points in $C$.
The procedure will be employed during the peeling process, and so $C$ will initially be the 
entire root hull. The root hull will have to be replaced by moving points up from the subtrees.
In fact the points moved up will be subchains of the root hulls of $T.L$ and $T.R$. 
Recursively, these subtrees will in turn need to repair their root hulls.
We do a case analysis below and find that the only procedure we will need is $\delete$.

Again we shall employ the analogy of a roof caving in, in the middle.
The rebuilding of the ``roof'' starts with identifying the endpoints  of the remaining left and right overhangs. These points will be the sentinels if the overhangs are empty.
The endpoints $a_l$ and $a_r$ define a line segment, $(a_l, a_r)$, which we shall call the \emph{roof segment}.

Before continuing, let us examine the dynamics of the points in the root hull of a (sub)tree $T$ during successive invocations of $\delete$. Successive calls might cause the roof segment $(a_l,a_r)$ to get bigger. For successive roof segments, the root hull of $T$ is queried. There are two phases involved: before the root hull intersects the roof segment, and thereafter. During the first phase, each new roof segment is below the previous one (cf. \cref{fig:phase:1} and \cref{fig:phase:2}). During the first phase, the root hull is not changed but is queried by a series of roof segments $(a_l,a_r)$. In the second phase, it gives up its subchain from $T.l$ to $T.r$ to its parent (or is extracted). Thereafter, for each new excision,  $T.l$ and $T.r$ will move further apart, until they become a sentinel. This is shown inductively on the depth of the subtree.

In the first phase the cursors ($T.l$ and $T.r$) start at the $\head$ and $\tail$ of the list and move in response to $\bbelow$ queries. Each cursor will move in one direction at first and then, only once, change direction. This is because each subsequent query has  $(a_l,a_r)$ moving apart on the parent's convex chain. See \cref{fig:phase:1} and \cref{fig:phase:2}. Now we examine the algorithm more carefully.

\begin{algorithm}[!ht]
    \caption{$\delete(C, T)$}
    \label{alg:delete}

    \SetKwInOut{Input}{Input}
    \SetKwInOut{Output}{Output}

    \Input{$C$, a chain of points to be deleted from $T.hull$, \\
            $T$: a hull tree or points et $P$}
    \Output{$T$: The updated hull tree for the point set $P\setminus H$}

    $ a_l = \pred(\head(C));  \enspace a_r = \lsucc(\tail(C))$ \;
    $\Use_L = \lnot\bbelow(T.L,a_l,a_r); \enspace \Use_R = \lnot\bbelow(T.R,a_l,a_r) $ \label{alg:delete:call:getExtremes}\;
    \lIf {$\Use_L$}{$(L_l,L_r) = \tangents(a_l,a_r,T.L) $}
    \lIf {$\Use_R$}{$(R_l,R_r) = \tangents(a_l,a_r,T.R) $}
    \Comment{Case 1: Neither subtree used to rebuild the roof}
    \lCase{$\lnot \Use_L \LAnd \lnot \Use_R$}{
        nothing
    }
    \Comment{Case 2: Only the right subtree used to rebuild the roof}    
    \Case{$\Use_R \LAnd (\lnot \Use_L \Or \aabove(a_r,L_l,R_l))$}{
         $C_R =$ chain in $T.R.hull$ from $R_l$ to $R_r$, inclusive \;
        Update $T.hull$  with $C_R$ inserted between $a_l$ and $a_r$\;
        $\delete(C_R, T.R)$ \;
    }
    \Comment{Case 3: Only the left subtree used to rebuild the roof}    
    \Case{$\Use_L\LAnd (\lnot \Use_R \Or \aabove(R_l,a_l,L_r))$}{
        $C_L =$ chain in $T.L.hull$ from $L_l$ to $L_r$, inclusive \;
        Update $T.hull$  with $C_L$ inserted between $a_l$ and $a_r$\;
        $\delete(C_L, T.L)$ \;
    }
    \Comment{Case 4: Both subtrees used to rebuild the roof}      
    \Case{$\Use_L \LAnd \Use_R \LAnd \aabove(a_l,R_l,L_l)\LAnd\aabove(L_r,a_r,R_r)$}{
        $(q_l,q_r) = \getBridge(T.L, T.R)$ \;
        $C_L =$ chain in $T.L.hull$ from $L_l$ to $q_l$, inclusive \;
        $C_R =$ chain in $T.R.hull$ from $q_r$ to $R_r$, inclusive \;
        $D =$ chain from concatenating $C_L$ to $C_R$ \;
        Update $T.hull$  with $D$ inserted between $a_l$ and $a_r$\;
        $\delete(C_L, T.L)$ \;
        $\delete(C_R, T.R)$ \;
    }
    \Return $T$ \;

\end{algorithm}

\begin{figure}
\centering
\begin{minipage}{.5\textwidth}
  \centering
   \begin{tikzpicture}[scale=5]

\tkzDefPoint (0.000, 0.050){l}
\tkzDefPoint (0.050, 0.200){m}
\tkzDefPoint (0.1500, 0.350){al}

\tkzDefPoint (0.760, 0.660){ar}
\tkzDefPoint (0.920, 0.700){s}
\tkzDefPoint (0.999, 0.710){t}

\tkzDefPoint (0.170, 0.220){A}
\tkzDefPoint (0.190, 0.200){A2}
\tkzDefPoint (0.240, 0.350){B}
\tkzDefPoint (0.220, 0.260){B2}
\tkzDefPoint (0.380, 0.410){C}
\tkzDefPoint (0.450, 0.390){C2}
\tkzDefPoint (0.550, 0.430){D}
\tkzDefPoint (0.550, 0.400){D2}

\tkzDefPoint (0.630, 0.400){a}
\tkzDefPoint (0.660, 0.580){b}
\tkzDefPoint (0.710, 0.610){c}
\tkzDefPoint (0.900, 0.650){d}

\tkzLabelPoint[above, color=blue](al){$a_l$}
\tkzLabelPoint[above, color=blue](ar){$a_r$}

\tkzLabelPoint[left, color=red](m){$a_l'$}
\tkzLabelPoint[above, color=red](s){$a_r'$}

\tkzLabelPoint[below, color=red](A){$T.l$}
\tkzLabelPoint[below=4pt, color=red](D){$T.r$}

\tkzDrawPolySeg[style=dashed, color=blue](al, ar)
\tkzDrawPolySeg[style=dashed, color=red](m, s)
\tkzDrawPolySeg[style=solid, color=blue](A, B, C, D)
\tkzDrawPolySeg[style=dotted, color=blue](a, b, c, d)
\tkzDrawPolySeg[style=solid, color=blue](l, m, al)
\tkzDrawPolySeg[style=solid, color=blue](ar, s, t)
\tkzDrawPoints(A, B, C, D, a, b, c, d, l, m, al, s, t, ar)
\path[->] (A2) edge [bend left] (B2);
\path[->] (D2) edge [out=160, in=30] (C2);
\end{tikzpicture}
  \caption{Before $a_l$ and $a_r$ move up.}
  \label{fig:phase:1}
\end{minipage}%
\begin{minipage}{.5\textwidth}
  \centering
  \begin{tikzpicture}[scale=5]

\tkzDefPoint (0.000, 0.050){l}
\tkzDefPoint (0.050, 0.200){m}

\tkzDefPoint (0.920, 0.700){s}
\tkzDefPoint (0.999, 0.710){t}

\tkzDefPoint (0.170, 0.220){A}
\tkzDefPoint (0.200, 0.200){A2}
\tkzDefPoint (0.240, 0.350){B}
\tkzDefPoint (0.240, 0.310){B2}
\tkzDefPoint (0.380, 0.410){C}
\tkzDefPoint (0.450, 0.390){C2}
\tkzDefPoint (0.550, 0.430){D}
\tkzDefPoint (0.550, 0.400){D2}

\tkzDefPoint (0.630, 0.400){a}
\tkzDefPoint (0.660, 0.580){b}
\tkzDefPoint (0.710, 0.610){c}
\tkzDefPoint (0.900, 0.650){d}

\tkzLabelPoint[left, color=red](m){$a_l'$}
\tkzLabelPoint[above, color=red](s){$a_r'$}

\tkzLabelPoint[above=-.1pt, color=red](B){$T.l$}
\tkzLabelPoint[above, color=red](C){$T.r$}

\tkzDrawPolySeg[style=dashed, color=red](m, s)
\tkzDrawPolySeg[style=solid, color=blue](A, B, C, D)
\tkzDrawPolySeg[style=dotted, color=blue](a, b, c, d)
\tkzDrawPolySeg[style=solid, color=blue](l, m)
\tkzDrawPolySeg[style=solid, color=blue](s, t)
\tkzDrawPoints(A, B, C, D, a, b, c, d, l, m, s, t)
\path[->] (B2) edge [bend right] (A2);
\path[->] (C2) edge [out=30, in=160] (D2);

\end{tikzpicture}
  \caption{After $a_l$ and $a_r$ have moved up. }
  \label{fig:phase:2}
\end{minipage}
\end{figure}

The rebuilding process breaks into four cases depending on whether any points from
$T_L$ and $T_R$ are above the roof segment and hence will be involved in the rebuilding.

\setcounter {case} {0}
\begin{case} Neither subtree is needed to rebuild the roof.\end{case}

This case, depicted in \cref{fig:layers:delete:case1}, is when the deletion of subchain $C$ from $T.hull$ leaves a hull that already dominates all other points.

\begin{case} Only the right subtree  is needed to rebuild the roof.\end{case}
This case, depicted in \cref{fig:layers:delete:case2}, is when $T.hull$ no longer dominates the hull chain in the right subtree. 
A second subcase is when the left root hull does extend above the $(a_l,a_r)$ segment but
is still below the left tangent from the right root hull.
To maintain the hull tree invariants, a subchain of $T.R.hull$ will have to be extracted and moved up to become part of $T.hull$.
In Case 2, only the vertices of $T.R.hull$ that will be moved up are scanned past twice,
since points scanned past in phase two are removed from the current hull.

\begin{figure}[!ht]
    \begin{minipage}{.5\textwidth}
      \centering
        \begin{tikzpicture}[scale=5]

\tkzDefPoint (0.000, 0.300){l}
\tkzDefPoint (0.050, 0.460){m}
\tkzDefPoint (0.150, 0.550){n}
\tkzDefPoint (0.250, 0.600){al}

\tkzDefPoint (0.650, 0.710){ar}
\tkzDefPoint (0.800, 0.725){r}
\tkzDefPoint (0.900, 0.730){s}
\tkzDefPoint (0.990, 0.733){t}

\tkzDefPoint (0.050, 0.200){A}
\tkzDefPoint (0.100, 0.300){B}
\tkzDefPoint (0.230, 0.400){C}
\tkzDefPoint (0.400, 0.430){D}

\tkzDefPoint (0.550, 0.400){a}
\tkzDefPoint (0.600, 0.500){b}
\tkzDefPoint (0.710, 0.600){c}
\tkzDefPoint (0.810, 0.630){d}

\tkzLabelPoint[above, color=blue](al){$a_l$}
\tkzLabelPoint[above, color=blue](ar){$a_r$}
\tkzDrawPolySeg[style=dashed, color=blue](al, ar)
\tkzDrawPolySeg[style=solid, color=blue](A, B, C, D)
\tkzDrawPolySeg[style=solid, color=blue](a, b, c, d)
\tkzDrawPolySeg[style=solid, color=blue](l, m, n, al)
\tkzDrawPolySeg[style=solid, color=blue](ar, r, s, t)
\tkzDrawPoints(A, B, C, D, a, b, c, d, l, m, n, al, r, s, t, ar)

\end{tikzpicture}
      \caption{Case 1: $a_l$ can connect to $a_r$.}
      \label{fig:layers:delete:case1}
    \end{minipage}%
    \begin{minipage}{.5\textwidth}
      \centering
        \begin{tikzpicture}[scale=5]


\tkzDefPoint (0.000, 0.200){l}
\tkzDefPoint (0.050, 0.360){m}
\tkzDefPoint (0.150, 0.450){n}
\tkzDefPoint (0.250, 0.500){al}

\tkzDefPoint (0.750, 0.760){ar}
\tkzDefPoint (0.900, 0.790){s}
\tkzDefPoint (0.990, 0.793){t}

\tkzDefPoint (0.050, 0.200){A}
\tkzDefPoint (0.100, 0.300){B}
\tkzDefPoint (0.230, 0.400){C}
\tkzDefPoint (0.400, 0.430){D}

\tkzDefPoint (0.500, 0.500){a}
\tkzDefPoint (0.550, 0.680){b}
\tkzDefPoint (0.650, 0.720){c}
\tkzDefPoint (0.700, 0.730){d}
\tkzDefPoint (0.900, 0.740){e}

\tkzLabelPoint[above, color=blue](al){$a_l$}
\tkzLabelPoint[above, color=blue](ar){$a_r$}
\tkzDrawPolySeg[style=dashed, color=blue](al, ar)
\tkzDrawPolySeg[style=solid, color=blue](A, B, C, D)
\tkzDrawPolySeg[style=solid, color=blue](a, b, c, d, e)
\tkzDrawPolySeg[style=solid, color=blue](l, m, n, al)
\tkzDrawPolySeg[style=solid, color=blue](ar, s, t)
\tkzDrawPoints(A, B, C, D, a, b, c, d, e, l, m, n, al, s, t, ar)

\end{tikzpicture}
      \caption{Case 2: Right subtree involved in rebuilding.}
      \label{fig:layers:delete:case2}
  \end{minipage}
\end{figure}

\begin{case} Only the left subtree  is needed to rebuild the roof.\end{case}
This case, depicted in \cref{fig:layers:delete:case3}, is the converse of case 2,
Again, a second subcase is when the right root hull does extend above the $(a_l,a_r)$ 
segment but is still below the right tangent from the left root hull.
As in the previous case, only the vertices of $T.L.hull$ that will be moved up  are scanned twice.

\begin{case} Both subtrees are needed to rebuild the roof.\end{case}
In this case, we need to compute two subchains, one from $T.l.hull$ and the other from $T.r.hull$, which are then connected by a bridge to fix the roof. 

\begin{figure}[!ht]
    \begin{minipage}{.5\textwidth}
      \centering
        \begin{tikzpicture}[scale=5]

\tkzDefPoint (0.000, 0.050){l}
\tkzDefPoint (0.050, 0.200){m}
\tkzDefPoint (0.100, 0.300){n}
\tkzDefPoint (0.1500, 0.350){al}

\tkzDefPoint (0.750, 0.760){ar}
\tkzDefPoint (0.800, 0.785){r}
\tkzDefPoint (0.900, 0.790){s}
\tkzDefPoint (0.990, 0.793){t}

\tkzDefPoint (0.170, 0.300){A}
\tkzDefPoint (0.240, 0.430){B}
\tkzDefPoint (0.330, 0.500){C}
\tkzDefPoint (0.500, 0.530){D}

\tkzDefPoint (0.650, 0.500){a}
\tkzDefPoint (0.700, 0.680){b}
\tkzDefPoint (0.750, 0.720){c}
\tkzDefPoint (0.800, 0.730){d}
\tkzDefPoint (0.900, 0.740){e}

\tkzLabelPoint[above, color=blue](al){$a_l$}
\tkzLabelPoint[above, color=blue](ar){$a_r$}
\tkzDrawPolySeg[style=dashed, color=blue](al, ar)
\tkzDrawPolySeg[style=solid, color=blue](A, B, C, D)
\tkzDrawPolySeg[style=solid, color=blue](a, b, c, d, e)
\tkzDrawPolySeg[style=solid, color=blue](l, m, n, al)
\tkzDrawPolySeg[style=solid, color=blue](ar, r, s, t)
\tkzDrawPoints(A, B, C, D, a, b, c, d, e, l, m, n, al, r, s, t, ar)

\end{tikzpicture}
      \caption{Case 3: Only left subtree involved. }
      \label{fig:layers:delete:case3}
    \end{minipage}
    \begin{minipage}{.5\textwidth}
      \centering
        \begin{tikzpicture}[scale=5]

\tkzDefPoint (0.000, 0.050){l}
\tkzDefPoint (0.050, 0.200){m}
\tkzDefPoint (0.100, 0.300){n}
\tkzDefPoint (0.1500, 0.350){al}

\tkzDefPoint (0.750, 0.760){ar}
\tkzDefPoint (0.800, 0.785){r}
\tkzDefPoint (0.900, 0.790){s}
\tkzDefPoint (0.990, 0.793){t}

\tkzDefPoint (0.170, 0.300){A}
\tkzDefPoint (0.240, 0.430){B}
\tkzDefPoint (0.330, 0.500){C}
\tkzDefPoint (0.500, 0.530){D}

\tkzDefPoint (0.550, 0.500){a}
\tkzDefPoint (0.600, 0.680){b}
\tkzDefPoint (0.650, 0.720){c}
\tkzDefPoint (0.700, 0.730){d}
\tkzDefPoint (0.800, 0.740){e}

\tkzLabelPoint[above, color=blue](al){$a_l$}
\tkzLabelPoint[above, color=blue](ar){$a_r$}
\tkzDrawPolySeg[style=dashed, color=blue](al, ar)
\tkzDrawPolySeg[style=solid, color=blue](A, B, C, D)
\tkzDrawPolySeg[style=solid, color=blue](a, b, c, d, e)
\tkzDrawPolySeg[style=solid, color=blue](l, m, n, al)
\tkzDrawPolySeg[style=solid, color=blue](ar, r, s, t)
\tkzDrawPoints(A, B, C, D, a, b, c, d, e, l, m, n, al, r, s, t, ar)

\end{tikzpicture}
      \caption{Case 4: Both subtrees involved.}
      \label{fig:layers:delete:case4}
    \end{minipage}
\end{figure}

\begin{lemma}
\label{lemma:case4}
In Case 4, only the vertices of $T.L.hull$ and $T.R.hull$ that will be moved up to join the roof are scanned twice.
\end{lemma}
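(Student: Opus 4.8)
The plan is to account for every vertex-scan a Case 4 execution of $\delete(C,T)$ performs on $T.L.hull$ and, symmetrically, on $T.R.hull$. On the left side the only procedures that move $T.L$'s cursors are, in the order in which they are called, $\bbelow(T.L,a_l,a_r)$, then $\tangents(a_l,a_r,T.L)$ (which leaves $T.L.l$ at the tangent point $L_l$ and $T.L.r$ at $L_r$), then $\getBridge(T.L,T.R)$ (which moves $T.L.l$ from $L_l$ to the bridge point $q_l$); the right side is symmetric, with $\getBridge$ instead moving $T.R.r$ from $R_r$ to $q_r$. By \cref{lemma:below}, the specification of $\tangents$, and \cref{lemma:getbridge}, each of these is a monotone cursor walk — one cursor, moving in one direction that is chosen in constant time — so within any single one of them no vertex is scanned past twice. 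Hence a vertex of $T.L.hull$ (resp. $T.R.hull$) is scanned past twice during the call only if it lies in the overlap of the ranges swept by two of these three walks.

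It remains to show that every such overlap lies inside $C_L$ (resp. $C_R$). The clean part is $\getBridge$: its sweep of $T.L.l$ stays within $[L_l,q_l]=C_L$ and its sweep of $T.R.r$ stays within $C_R$, using \cref{lemma:getbridge} together with the fact that the rebuilt roof $\langle a_l,C_L,C_R,a_r\rangle$ is a northwest monotone convex chain, which forces $L_l$ to lie weakly left of $q_l$ and $q_r$ weakly left of $R_r$. For the $\bbelow$ and $\tangents$ walks I would argue from convexity and the positions of the fixed query points $a_l$ and $a_r$. In Case 4 both $\bbelow$ calls return false, so by \cref{lemma:below} at least one cursor of each subtree is above the line $(a_l,a_r)$ when $\bbelow$ returns; combined with the $\bbelow$ precondition that the queried hull rises above $(a_l,a_r)$ only between $a_l$ and $a_r$, this places the cursor the call moved, and the tangent points $L_l,L_r$ (resp. $R_l,R_r$), at mutually comparable positions along the chain. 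Working through the resulting monotone ordering of the walk endpoints shows that any vertex a cursor traverses and then re-traverses has been excised into $C_L$ or $C_R$, so every vertex outside $C_L\cup C_R$ is scanned past at most once — which is the claim.

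The step I expect to be the main obstacle is the last one: establishing the precise monotone order of the endpoints of the $\bbelow$, $\tangents$, and $\getBridge$ walks on each chain, so that every doubly-swept vertex is pinned inside the excised subchain. The subtlety is that each cursor enters the call from the first phase having already moved once in one direction and possibly reversed once (\cref{fig:phase:1}, \cref{fig:phase:2}), and one must rule out a scenario in which $\bbelow$ advances a cursor past a vertex that $\tangents$ then drags it back over while that vertex remains in the repaired root hull of $T.L$ (resp. $T.R$). I expect this to come down to a short sequence of tangency comparisons relative to $a_l$ and $a_r$, after which the containment of all overlaps in $C_L\cup C_R$ is immediate.
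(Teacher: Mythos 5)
Your overall plan is the same as the paper's in its one essential idea: within each of the cursor walks no vertex is passed twice, so a double scan can only arise when a cursor reverses direction, and a backward scan toward the head of $T.L.hull$ (resp.\ toward the tail of $T.R.hull$) covers only vertices lying between the tangent point and the bridge point, i.e.\ vertices of $C_L$ (resp.\ $C_R$) that will be moved up. However, the proposal as written stops short of proving exactly the step that constitutes the lemma. You explicitly defer ``establishing the precise monotone order of the endpoints of the $\bbelow$, $\tangents$, and $\getBridge$ walks'' and say you \emph{expect} it to follow from a short sequence of tangency comparisons. That containment --- every vertex swept by two of the walks lies in $C_L\cup C_R$ --- is the entire content of the statement, so what you have is a correct decomposition plus an acknowledged gap at the decisive point, not a proof. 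In particular you never rule out the scenario you yourself raise, namely that $\bbelow$ advances a cursor past a vertex that $\tangents$ then drags it back over while that vertex survives in the repaired root hull of the subtree.

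For comparison, the paper's proof is much shorter and handles the crux by a single direct observation: with $T.L.l$ positioned at the bridge point, the scan for the tangent point visible from $a_l$ either walks toward $T.L.r$, in which case every vertex encountered is encountered for the first time, or walks backward toward the head of $T.L.hull$, in which case every vertex passed lies in the excised subchain and will be moved up; the right side is symmetric. The movement due to $\bbelow$ is not charged to this lemma at all --- it is absorbed by the two-phase discussion preceding the algorithm (each cursor moves in one direction and reverses at most once before its chain segment is surrendered), which is what feeds into the amortized bound of \cref{theorem:correctness:delete:cont}. If you want to complete your version, you either need to carry out the endpoint-ordering argument you sketch, or adopt the paper's accounting and exclude the $\bbelow$ movement from this lemma's scope.
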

\begin{proof}
Recall that after the call to $\getBridge$, the two cursors $T.L.l$ and $T.R.r$ are already pointing to the left and right bridge points.  

The scan for the left tangent point visible to $a_l$ and above the segment $a_la_r$ is done by walking $T.L.l$ forward or backward. The decision of which direction to walk can be done in constant time. If the walk toward $T.L.r$ is chosen, then all the points encountered will be encountered for the first time. However, if the scan is backward toward the head of $T.L.hull$, then any point encountered is one that will be moved up.
A symmetrical argument applies on the right. \qed
\end{proof}

\begin{theorem}
\label{theorem:correctness:delete:cont}
Consider a sequence of calls to $\extractHull$, starting with $n$ points, until all points have been 
extracted.
The total time amortized over all calls is $\BigO{n\log n}$
\end{theorem}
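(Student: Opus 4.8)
The plan is a charging argument spanning the entire peeling phase, i.e.\ the given sequence of $\extractHull$ calls. Three structural facts drive it. First, the hull-tree skeleton has height $\lceil\log n\rceil$. Second, during peeling $\delete$ never moves a point downward: in every case it either leaves a subtree untouched or splices a subchain of a child's hull \emph{up} into $T.hull$ and then recurses to repair that child; no call to $\iinsert$ occurs. Third, by \cref{lemma:below}, \cref{lemma:getbridge}, \cref{lemma:case4} and the remarks accompanying Cases~2 and~3, every scan performed inside $\bbelow$, $\tangents$ or $\getBridge$ is either $\BigO{1}$ of ``overhead'' or walks only over vertices about to leave the scanned hull moving upward. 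Accordingly I would introduce three kinds of tokens: one per $\extractHull$ call; one per \emph{upward move} (each time a point leaves a node's hull and enters its parent's hull during some $\delete$); and one per $\delete$ invocation.

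First I would count the tokens. Since $k\le\lceil n/3\rceil$ there are $\BigO{n}$ calls to $\extractHull$. A point can make an upward move at most once per level, hence at most $\lceil\log n\rceil$ times over the whole peeling, so the upward-move tokens number $\BigO{n\log n}$. For the third kind: the outermost $\delete$ of each $\extractHull$ is paid by the corresponding $\extractHull$ token, and every recursive call $\delete(C_L,T.L)$ (resp.\ $\delete(C_R,T.R)$) occurs only in Cases~2--4, where $C_L$ (resp.\ $C_R$) is a chain that has just moved up out of the child and contains at least one true point; charge that recursive call to one such upward move. Hence the number of $\delete$ invocations is $\BigO{n}+\BigO{n\log n}=\BigO{n\log n}$.

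Next I would bound the cost of a single $\delete(C,T)$. Its non-scanning part is $\BigO{1+|C|+|D|}$: constant for the $\aabove$ case tests, and linear for splicing $C$ out of and the new chain $D$ into $T.hull$; the $|D|$ vertices are exactly those moving up (upward-move tokens), while the $|C|$ vertices are either the layer being output (so their total over all outermost calls is $n$) or already carry an upward-move token from the move that created $C$. The scanning part I would amortize node-by-node with the two-phase picture described before the theorem (\cref{fig:phase:1}, \cref{fig:phase:2}). In phase~1 of a node $T$, $T.hull$ is static --- it is precisely the hull $T$ had at the end of $\BuildTree$ --- and is probed by a sequence of $\bbelow$ (and, at the transition, $\tangents$) queries whose roof segments spread apart monotonically; consequently each cursor $T.l$, $T.r$ travels monotonically and reverses direction at most once, so total cursor travel during phase~1 is $\BigO{|T.hull|}$ plus $\BigO{1}$ per probe. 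In phase~2 of $T$, by \cref{lemma:below}, \cref{lemma:getbridge} and \cref{lemma:case4}, every vertex a cursor steps over --- inside $\bbelow$, inside the $\tangents$ calls, or inside $\getBridge$ --- other than the $\BigO{1}$ endpoints of that scan is a vertex about to be moved up out of $T.hull$; charge it to an upward-move token. The residue is $\BigO{1}$ per $\delete$ invocation.

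Finally I would sum. The phase-1 travel $\BigO{|T.hull|}$ summed over all nodes $T$ is $\BigO{n}$, because these static hulls are the post-$\BuildTree$ hulls and those partition $P$. The $\BigO{1}$-per-probe and $\BigO{1}$-per-invocation terms sum to a constant times the number of $\delete$ invocations, hence to $\BigO{n\log n}$. The remaining scanning cost and the $\BigO{|C|+|D|}$ splicing cost are paid by the upward-move tokens together with the $\BigO{n}$ output cost, hence $\BigO{n\log n}$. Adding the three contributions gives total time $\BigO{n\log n}$ for the whole sequence; combined with \cref{lemma:buildtree:runtime}, this yields an $\BigO{n\log n}$-time procedure for one quarter's convex layers. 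I expect the main obstacle to be making the phase-1 claim fully rigorous: one must prove that the roof segments successively handed to a fixed node really do spread apart monotonically (so each cursor turns around at most once), and account carefully for the rule by which one cursor may push the other, so that the ``free'' scanning at $T$ before its vertices begin to be evicted is $\BigO{|T.hull|}$ overall rather than per probe. A secondary nuisance is verifying that the cursor bookkeeping for $\getBridge$ during peeling meshes with that for $\bbelow$ and $\tangents$ without double counting.
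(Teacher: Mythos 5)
Your proposal is correct and takes essentially the same approach as the paper: amortize cursor travel over the $\BigO{\log n}$ upward moves each point can make, using the two-phase observation (static hull, then eviction) so that scans at a node either total $\BigO{|T.hull|}$ before evictions begin or are charged to points about to move up. The paper's proof is a terse sketch that invokes ``the above discussion''; your token scheme and the explicit accounting of phase-1 travel versus phase-2 evictions just fill in those details rather than pursuing a different strategy.
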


\begin{proof}
We assume that we start with a hull tree (built with $\BuildTree$).
The run time is dominated by cursor movement.
Each procedure takes constant time (and the number of calls is proportional  to the
number of chain movements) plus the number of points a cursor has moved past.
The above discussion shows that each point is passed over a constant number of times before moving out a chain.
As with $\BuildTree$, this leads to our result. \qed
\end{proof}

\subsection{Merge}
Recall our discussion so far has only been for ``northwest'' hull trees, 
which we now call $T_{NW}$.
We rotate the point set 90 degrees and recompute three times, resulting in the
four hull trees $T_{NW}, T_{NE}, T_{SE}$, and $T_{SW}$.
We will use these to construct the successive convex hulls by peeling.
When the points are in general position, the extreme points 
(topmost, bottommost, rightmost, and leftmost) are on the convex hull.
Note that some of these may coincide. Further, it is clear that the chain that connects the leftmost point with the topmost is just the northwest hull chain found at the root of $T_{NW}$. The rest of the convex hull is the root hull chains of the other trees.

Initially all the points are ``unmarked''. 
When marked, a point is marked in all four trees.
We iteratively perform the following actions to construct each layer
\begin{enumerate*}[label=\emph{\alph*}), before=\unskip{: }, itemjoin={{; }}, itemjoin*={{, and }}]
\item Retrieve and delete the root hull chain from each of the hull trees
\item Remove the marked points from each chain
\item Mark the points remaining in each chain
\item Concatenate the four chains to form the convex hull for this layer.
\end{enumerate*}

This process stops when all vertices have been marked, which is when all the points
have been deleted from all the trees. This correctness follows from  above.

\begin{lemma} Given a set $S$ of $n$ points and the four hull trees of $P$ with the four orientations of $NW, NE, SE$, and $SE$, the $\merge$ procedure executes in $\BigO{n\log n}$ time.
\end{lemma}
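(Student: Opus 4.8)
The plan is to account for the total cost of the merge loop by charging all the work to the same amortized ledger used in \Cref{theorem:correctness:delete:cont}, plus a linear overhead for the bookkeeping of marked points. First I would observe that the merge procedure performs a sequence of rounds, and in round $i$ it calls $\extractHull$ once on each of the four trees $T_{NW}, T_{NE}, T_{SE}, T_{SW}$. Across all rounds, the number of $\extractHull$ calls on any one tree is exactly the number of layers $k \le \lceil n/3\rceil$, but more importantly the four sequences of $\extractHull$ calls are precisely the sequences analyzed in \Cref{theorem:correctness:delete:cont}. Hence, by that theorem applied to each of the four trees, the total time spent inside $\extractHull$ (equivalently, inside $\delete$, $\bbelow$, $\getBridge$, $\tangents$, and all cursor movement) is $\BigO{n\log n}$ summed over all rounds. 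This is the substantive part and it is already done; the remaining work is to show the extra per-round bookkeeping does not exceed this bound.

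Next I would bound the bookkeeping. In each round, after extracting the four chains, the procedure scans each chain to (b) discard already-marked points and (c) mark the remaining ones, then (d) concatenates the four chains. The cost of steps (b)--(d) in a given round is proportional to the total length of the four extracted chains in that round, which is $\BigO{1}$ per vertex touched. The key point is that every vertex appears in an extracted chain (over all rounds and all four trees) a total of $\BigO{1}$ times: a vertex is extracted from a given tree's root hull exactly once over the whole peeling of that tree, so it is touched by steps (b)--(d) at most four times overall. Summing over all $n$ vertices gives $\BigO{n}$ total for the bookkeeping. Marking a point in all four trees is a constant-time operation per point (flipping a flag on the shared point record), and each point is marked once, contributing another $\BigO{n}$.

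Finally, I would note that the loop termination condition (all points marked, i.e.\ all trees empty) is checked in $\BigO{1}$ per round and the number of rounds is at most $\lceil n/3\rceil = \BigO{n}$, so this check contributes $\BigO{n}$ as well. Adding the three contributions---$\BigO{n\log n}$ from the four invocations of \Cref{theorem:correctness:delete:cont}, $\BigO{n}$ for chain bookkeeping, and $\BigO{n}$ for round overhead---yields the claimed $\BigO{n\log n}$ bound.

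The main obstacle I anticipate is a subtle correctness-of-charging issue rather than a counting issue: I must confirm that interleaving the four $\extractHull$ sequences round-by-round (rather than running one tree to completion and then the next) does not invalidate the amortized analysis of \Cref{theorem:correctness:delete:cont}. This is fine because the four hull trees are entirely independent data structures---a cursor movement or chain modification in $T_{NW}$ never affects $T_{NE}$, etc.---so the per-tree amortized argument (each point scanned past $\BigO{1}$ times per level, $\BigO{\log n}$ levels) goes through unchanged regardless of the order in which $\extractHull$ calls to different trees are interleaved. One should also double-check that removing marked points from a chain does not disturb any invariant needed by a later $\extractHull$ on that same tree; but the marked points have already been physically deleted from that tree by $\extractHull$ when its own root hull containing them was peeled, so ``removing marked points from the extracted chain'' is purely an operation on the returned list and touches no tree state.
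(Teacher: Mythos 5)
Your proof is correct and follows essentially the same approach as the paper: it charges the extraction work to the per-tree amortized bound of \cref{theorem:correctness:delete:cont} and bounds the mark/unmark/concatenate bookkeeping by the total length of all extracted chains, which is $\BigO{n}$ since each point is extracted from each of the four trees exactly once. Your additional remark that interleaving the four trees' $\extractHull$ calls is harmless because the trees are independent is a correct (if unstated-in-the-paper) clarification, but does not change the substance of the argument.
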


\begin{proof}
Note that the sum of the lengths of all the chains is $\BigO{n}$. So marking points and removing them later  all in all  takes linear time. Recall that  all the calls to $\delete$ altogether take time \BigO{n\log n} time. \qed
\end{proof}

\section{Conclusion}
\label{section:layering:openProblems}
We have provided a new simple optimal algorithm for the convex layers problem. Detailed pseudocode, space and time complexity results are also given. The pseudocode might appear detailed, but that is only because the approach is simple enough that we can deal with all cases explicitly. However, by using four sets of hulls, we only need to work with monotone chains which simplifies our case analyses and makes the correctness argument straightforward.
The extension to dynamic point sets remains an open problem.

\bibliographystyle{splncs03}

\bibliography{chBiblio}

\end{document}